\documentclass{patmorin}
\usepackage{amsthm,amsmath,graphicx,stmaryrd}
\usepackage{pat}
\listfiles

\DeclareMathOperator{\erf}{erf}
\DeclareMathOperator{\area}{area}

\title{\MakeUppercase{On the Average Number of Edges in Theta Graphs}}
\author{Pat Morin and Sander Verdonschot%
	\thanks{School of Computer Science, Carleton University}}

\begin{document}
\begin{titlepage}
\maketitle

\begin{abstract}
  Theta graphs are important geometric graphs that have many applications,
  including wireless networking, motion planning, real-time animation, and
  minimum-spanning tree construction.  We give closed form expressions
  for the average degree of theta graphs of a homogeneous Poisson
  point process over the plane.  We then show that essentially the same
  bounds---with vanishing error terms---hold for theta graphs of finite
  sets of points that are uniformly distributed in a square.  Finally,
  we show that the number of edges in a theta graph of points uniformly
  distributed in a square is concentrated around its expected value.
\end{abstract}
\end{titlepage}

\section{Introduction}

Theta graphs
\cite{clarkson:approximation,keil:approximating,keil.gutwin:classes}
are important geometric graphs that have many applications,
including wireless networking \cite{alzoubi.li.ea:geometric},
motion planning \cite{clarkson:approximation}, real-time animation
\cite{fischer.lukovszki.ea:geometric}, and minimum-spanning tree
construction \cite{yao:on}.  These graphs are defined on a planar point
set, $S$, with an integer parameter $k$.  For each $i\in\{1,\ldots,
k\}$, define the cone
\[ 
   C_i=\{u\in \R^2: \measuredangle qou\in [2\pi(i-1)/k,2\pi i/k)\} \enspace ,
\]
where $q=(1,0)$ and $o=(0,0)$.  In the $\theta_k$-graph, $\theta_k(S)$,
each point $u\in S$ has an edge connecting it to the nearest point,
if any, in the cone $C_i+u$, for each $i\in\{1,\ldots,k\}$.  Here,
``nearest'' has a special meaning: The theta graph connects $u$ to the
point whose orthogonal projection on the axis of $C_i+u$ is closest to $u$
(see \figref{theta-vs-yao}).

\begin{figure}
  \begin{center}
     \begin{tabular}{c@{\hspace{1in}}c}
     \includegraphics{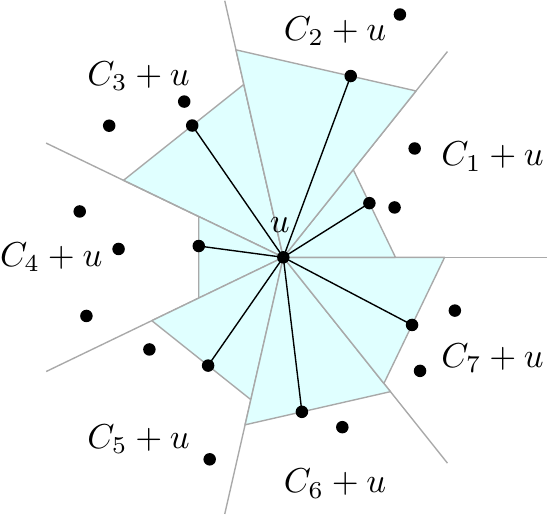} &
     \includegraphics{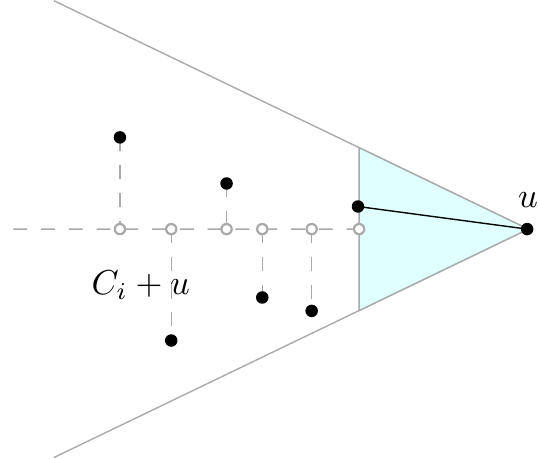} \\
     (a) & (b)
     \end{tabular}
  \end{center}
  \caption{A vertex, $u$, in a $\theta_7$-graph; (a)~$u$ connects to the
  ``nearest'' point in each cone; (b)~``nearest'' is defined in terms of 
  the projection onto the axis of each cone.}
  \figlabel{theta-vs-yao}
\end{figure}

Theta graphs have two important properties that make them suited to a
wide variety of applications:  They are \emph{sparse}; $\theta_k(S)$
has at most $k|S|$ edges and they are \emph{spanners}; the length
of the shortest path between any two vertices $u$ and $w$ is at most
a constant (depending only on $k$ and not on $S$) times the Euclidean
distance between $u$ and $w$.  For any point set, $S$, $\theta_k(S)$ is
a spanner for any $k\ge 4$
\cite{barba.bose.ea:on,bonichon.gavoille.ea:connections,%
bose.morin.ea:theta-5,keil:approximating,ruppert.seidel:approximating}.

Note that, although $\theta_k(S)$ has at most $k|S|$ edges, it can also
have significantly fewer edges.  For example, if the points of $S$ all
lie on a line, then $\theta_k(S)$ has only $|S|-1$ edges.  More typical,
though, is for $\theta_k(S)$ to have somewhere between $k|S|/2$ and $k|S|$
edges;  each vertex $u\in S$ chooses $k$ edges\footnote{Specifically only
the points that are \emph{unoriented $\theta$-maxima} of $S$ may choose
fewer than $k$ edges; the number of unoriented $\theta$-maxima is typically
$O(\sqrt{|S|})$ \cite[Theorem~4]{avis.beresford-smith.ea:unoriented}.}
of the graph but sometimes an edge $uw$ is chosen both by $u$ and $w$
and therefore it should not be counted twice.

\subsection{The Models and Results}

In this paper, we study the typical number of edges in $\theta_k$-graphs
by studying the average number of edges in two different models of random
point sets.

We begin, in \secref{poisson}, by studying (the infinite)
$\theta_k$-graphs generated by a homogeneous Poisson point process
with unit intensity over the entire Euclidean plane.  In this model,
we study the average degree of a vertex in the $\theta_k$-graph.
For the $\theta_k$-graph, this quantity is at most $2k$ since each
vertex defines $k$ edges of the graph and each edge has 2 endpoints.
However, in some cases an edge $uw$ is \emph{mutual} in the sense that
the edge is created both by $u$ and by $w$.  If we let $p_k$ denote the
probability that an edge of the $\theta_k$-graph is mutual, then the
average degree of the $\theta_k$ graph is
\[
    d_k = (2-p_k)k \enspace .
\]
(The second term corrects for the double-counting of mutual edges.)  Thus,
understanding the average degree of a $\theta_k$-graph boils down to
computing $p_k$.

In \secref{even} we show that, for all even integers $k\ge 4$,
\[
    p_k=\frac{\pi\sqrt{3}}{9}\approx 0.6045997883 \enspace .
\]
That is, the probability of an edge being mutual is independent of
$k$. Thus, for all even integers $k\ge 4$, the average degree of a
vertex is
\[
  d_k = \left(2-\frac{\pi\sqrt{3}}{9}\right)k \approx 1.395400212\cdot k \enspace .
\]

In \secref{odd} we show that, for odd values of $k\ge 5$, the situation
is very different.  The mutual edge probability, $p_k$, depends on
$k$ in a complicated way that includes trigonometric functions and
square roots.  However, the value of $p_k$ is significantly larger than
$\frac{\pi\sqrt{3}}{9}$ for all odd values of $k$.  Indeed, $p_k$ is a
decreasing function of $k$ and
\[
  p_k\ge \lim_{k\to\infty} p_k = 2\arctan(1/3)\approx 0.6435011088 \enspace .
\]
Thus, for all odd values of $k\ge 5$,
\[
   d_k \le (2-2\arctan(1/3))k \approx 1.356498891 k
\]
Thus, in some sense, odd values of $k$ offer ``more bang for the buck.''

In \secref{iud}, we also study the \emph{i.u.d.\ model}, in which a
set, $S$, of $n$ points is independently and uniformly distributed in
a square.  In this model, we show that essentially the same bounds hold.
Specifically, If $m_k$ is the number of edges of $\theta_k(S)$, then
\[
    \E[m_k] \in nd_k/2\pm O(k\sqrt{n\log n}) \enspace \enspace ,
\]
where $d_k$, defined above, is the average degree of the $\theta_k$-graph
in the Poisson model.  We also give a concentration result that shows
that the number of edges, $m_k$, is highly concentrated around its
expected value.  In particular
\[
    \Pr\{|m_k - nd_k/2| \ge k\sqrt{cn\log n}\} \le n^{-\Omega(c)} \enspace .
\]

\subsection{Related Work}

As discussed in the introduction, a plethora of literature exists on
theta graphs and their applications, though most of this work focuses on
worst-case analysis.  One notable exception is the work of Devroye \etal\
\cite{devroye.gudmundsson.ea:on} who study the maximum degree of theta
graphs and show that, if $S$ is a set of $n$ points independently and
uniformly distributed in a certain unit square, then $\theta_k(S)$
has maximum degree concentrated around $\Theta((\log n)/\log\log
n)$.\footnote{Devroye \etal\ actually consider the closely-related Yao graphs
\cite{flinchbaugh.jones:strong,yao:on}, but their proofs apply,
almost without modification, to theta graphs.}

In contrast, properties of other proximity graphs of random point
sets have been studied extensively:  
\begin{itemize}
\item Devroye \cite{devroye:expected}
presents a general theorem for obtaining exact leading constants for the
expected degree of a number of proximity graphs over point sets drawn
from a large class of distributions.  This theorem can be applied to
Gabriel graphs, relative neighbourhood graphs, and nearest-neighbour
graphs.  This work \cite[Section~7]{devroye:expected} also mentions
``directional nearest-neighbour graphs,'' now commonly known as Yao graphs
\cite{flinchbaugh.jones:strong,yao:on}, and points out that the general
theorem does not apply to these (nor does it apply to theta graphs---for
the same reasons).

\item 
Penrose and Yukich \cite{penrose.yukich:central,penrose.yukich:weak}
develop weak laws of large number and central limit theorems for several
statistics of proximity graphs of random point sets under some assumptions
about the locality of the graph and the statistic. Their results apply
to statistics like total edge length and number of components of graphs
such as $k$-nearest neighbour graphs, sphere of influence graphs,
and Delaunay triangulations.



\item 
Bern \etal\ \cite{bern.eppstein.ea:expected} study the maximum degree
of Delaunay triangulations of random point sets.  Devroye \etal\ study
the maximum degree of Gabriel graphs \cite{devroye.gudmundsson.ea:on}
of random point sets.  Arkin \etal\ \cite{arkin.anta.ea:probabilistic}
study the length of the longest edge in Delaunay triangulations of random
point sets.

\item 
The issues of connectivity and giant components in the $r$-disk graph
of random point sets---in which an edge $uw$ is present if and only
if the distance between $u$ and $w$ is at most $r$---is the subject of
intensive research and there are at least two books devoted to the topic
\cite{meester.roy:continuum,penrose:random}.
\end{itemize}

\section{The Poisson Model}
\seclabel{poisson}

In the Poisson model, the number of points in any region with area is $A$
follows a Poisson distribution with parameter $A$.  For definitions of
point Poisson processes and distributions see, for example, Daley and
Vere-Jones \cite[Chapter~2]{daley.vere-jones:introduction}.  For our
purposes, the most important properties of the Poisson process are
the following:
\begin{enumerate}
\item The probability that a particular Lebesgue-measurable set, $X$,
   whose area (Lebesgue measure) is $A$ is empty of points is exactly
   $e^{-A}$.
\item For disjoint regions $X_1,\ldots,X_v$, the events ``$X_i$ is empty
   of points'', for $i\in\{1,\ldots,v\}$ are independent.
\end{enumerate}

Throughout this section, and in particular in \secref{odd}, we have made
extensive use of Mathematica to do symbolic integration and manipulation
of trigonometric functions.  Mathematica notebooks containing the code
for these calculations are available at the first author's webpage.

\subsection{Analysis of $p_k$ for Even $k$}
\seclabel{even}

In this section we determine the value of $p_k$ for even values of $k$.
Somewhat surprisingly, the value of $p_k$ in this case does not depend on $k$.

\begin{lem}\lemlabel{even}
 For even integers $k\ge 4$, $p_k=\frac{\pi\sqrt{3}}{9}\approx 0.6045997883$.
\end{lem}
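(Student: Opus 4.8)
The plan is to compute $p_k$ as a conditional probability: given that an edge $uw$ exists in the $\theta_k$-graph, what is the probability it is mutual? By translation and rotation invariance of the Poisson process, I would fix $u=o$ at the origin and condition on the event that $w$ is the point chosen by $u$ in cone $C_1$ (the other cones are symmetric). Place $w$ at distance $r$ along the direction $\varphi$ inside $C_1$. The edge $uw$ exists iff the region $R_u$ — the part of $C_1+u$ consisting of points whose projection onto the axis of $C_1$ is no farther from $u$ than $w$'s projection — is empty of other points. The edge is \emph{also} chosen by $w$ iff $u$ lies in some cone $C_j+w$ and $u$ is the nearest point (by projection) of $w$ in that cone; equivalently, the analogous region $R_w$ for $w$ in the cone containing $u$ is empty. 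So $uw$ is mutual iff both $R_u$ and $R_w$ are empty.

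**The key structural observation for even $k$.** Since $k$ is even, the cone of $w$ that contains $u=o$ is exactly the cone antipodal to $C_1$ — it points in the direction opposite to the axis of $C_1$. This is the crucial simplification: for even $k$, $u$ always sits on the \emph{axis direction} (reversed) of the relevant cone of $w$ is not quite right, but $u$ lies in the reflected copy of $C_1$, and the geometry of $R_w$ is the mirror image of the geometry of $R_u$. I would make this precise and show that $R_u \cup R_w$ is a region whose area depends only on $r$ and $\varphi$ (the offset of $w$ from the axis of $C_1$), and importantly that $R_u$ and $R_w$ overlap in a controlled way. The probability that $uw$ is an edge is $\propto \int e^{-\area(R_u)}$ over choices of $w$, and the probability it is a mutual edge is $\propto \int e^{-\area(R_u\cup R_w)}$; then $p_k$ is the ratio of these two integrals (integrating $r$ from $0$ to $\infty$ and $\varphi$ over the angular width $2\pi/k$ of the cone, with the Jacobian $r\,dr\,d\varphi$).

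**Carrying out the integration.** Normalize so the cone half-angle is $\theta=\pi/k$; write $R_u$ as the set of points in the cone with axis-projection $\le r\cos(\varphi)$ (taking the axis of $C_1$ as reference and $\varphi\in[-\theta,\theta)$ the signed angle of $w$ from that axis) — actually since the cone is $[0,2\pi/k)$ I'd use $\varphi\in[0,2\pi/k)$ and measure projection onto the axis direction $\pi/k$. The area of $R_u$ is a simple trig expression in $r$ and $\varphi$ (a triangle-like region), giving $\area(R_u)=c(\varphi)\,r^2$ for an explicit $c(\varphi)$; similarly $\area(R_w)=c(\varphi')\,r^2$ where $\varphi'$ is the offset of $u$ inside $w$'s cone, and by the even-$k$ reflection $\varphi'$ is a simple function of $\varphi$ (I expect $\varphi'$ and $\varphi$ to be related by $\varphi'=2\theta-\varphi$ or similar). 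The union area is $(c(\varphi)+c(\varphi') - \area(R_u\cap R_w))r^2$. After substituting, the $r$-integral $\int_0^\infty e^{-ar^2} r\,dr = 1/(2a)$ collapses everything to a one-dimensional integral over $\varphi$, and $p_k$ becomes a ratio of two integrals over $[0,2\pi/k)$ of reciprocals of quadratic-in-trig expressions. The $k$-dependence should then cancel because the integrand, after the substitution that unfolds the cone, has a periodicity/symmetry that makes $\int_0^{2\pi/k}$ of the relevant ratio independent of $k$ — this is the "somewhat surprising" part and I expect it to drop out once the integrand is written in terms of $\tan\varphi$ or via a substitution $t=\tan(\varphi - \theta)$ centered on the axis.

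**Main obstacle.** The routine trig integrals are exactly the kind of thing the authors farm out to Mathematica, so those won't be the real difficulty. The genuine obstacle is the \emph{geometric bookkeeping} of the overlap region $R_u\cap R_w$ and confirming that for every even $k\ge 4$ the cone of $w$ containing $u$ is the single antipodal cone (no edge cases where $u$ straddles a cone boundary with positive probability — it doesn't, since that's a measure-zero event — but one must also rule out $u$ being chosen by $w$ in a \emph{different} cone, which can't happen since cones partition the plane). I'd also need to double check the claim that $R_w$ lies in the reflection of $C_1$: this uses evenness essentially, and for odd $k$ it fails, which is precisely why Section~\ref{sec:odd} is harder. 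Once the region $R_u\cup R_w$ is correctly identified, reducing to the single integral and extracting $\pi\sqrt3/9$ is mechanical.
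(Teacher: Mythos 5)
Your proposal is correct and follows essentially the same route as the paper: the mutuality condition is emptiness of the reflected region (your $R_w$, the paper's $T'$) minus the already-empty region $R_u=T$, the evenness of $k$ is used exactly as you say to place $u$ in the single antipodal cone of $w$ so that $R_w$ is the point-reflection of $R_u$ through the midpoint of $uw$, and $p_k$ is obtained by integrating $\exp(-\area(R_u\cup R_w))$ against the density of the chosen neighbour's location. The only difference is cosmetic: the paper parametrizes $w$ by the length $\ell$ of the far edge of $T$ and the offset $r$ of $w$ along it (under which $\area(T'\setminus T)=\alpha(r^2+(\ell-r)^2)$ and $r\mid\ell$ is uniform), rather than by polar coordinates $(r,\varphi)$, which makes the final integral cleaner but is an equivalent change of variables.
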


\begin{proof}
  Let $u$ be an arbitrary vertex in a $\theta_k$-graph and let $w$
  be a vertex that $u$ has chosen as a neighbour in one of its cones,
  $C$ ($w$ is the ``closest'' vertex to $u$ in $C$).  Let $T$ be the
  open isosceles triangle defined by $C$ and a line through $w$ that is
  orthogonal to the axis of $C$; refer to \figref{mutual}.  If the edge
  of $T$ opposite $u$ has length $\ell$, then $w$ partitions this edge
  into two pieces of length $r$ and $\ell-r$.

  \begin{figure}
    \centering{\includegraphics{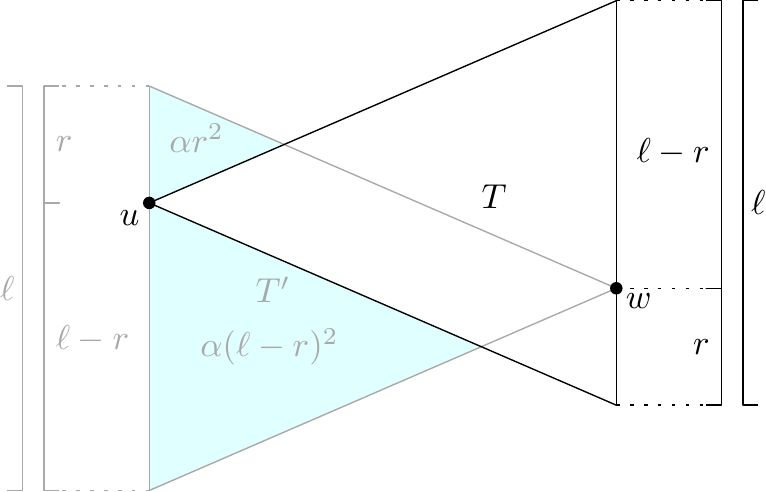}}
    \caption{The edge $uw$ is mutual if and only if $T'\setminus T$ 
       is empty of points.}
    \figlabel{mutual}
  \end{figure}
 
  Let $T'$ be the triangle obtained by reflecting $T$ through the
  midpoint of the edge $uw$ (so that $w$ is a vertex of $T'$). Note
  that, because $k$ is even, $T'$ coincides with one of $u$'s cones.
  In particular, if $w\in u+C_i$, then $T'$ is contained in $w+C_{j}$
  where $j=((i+k/2-1)\bmod k)+1$.  Therefore, the edge $uw$ is mutual if and
  only if $T'\setminus T$ contains no points.  The area of $T'\setminus
  T$ is
  \[
     A(T'\setminus T) = \alpha(r^2+(\ell-r)^2)  \enspace ,
  \]
  where $\alpha=\cos(\theta/2)/(4\sin(\theta/2))$.  We now have
  enough information to compute the probability that the edge $uw$
  is mutual conditional on $\ell$ and $r$:
  \[
    \Pr\{\mbox{$uw$ is mutual} \mid \ell,r\} = \exp(-\alpha(r^2+(\ell-r)^2))
      \enspace .
  \]
  Next, observe that the location of $w$ is uniformly distributed on the
  edge of $T$ opposite $u$, so the value of $r$ (conditioned on $\ell$)
  is uniform over $[0,\ell]$, unconditioning $r$ gives
  \begin{align*}
    f(\ell) & \equiv \Pr\{\mbox{$uw$ is mutual} \mid \ell\} \\
     & = \int_0^\ell 
          (1/\ell)\Pr\{\mbox{$uw$ is mutual} \mid \ell,r\}\,\mathrm{d}r \\
     & = \int_0^\ell (1/\ell)\exp(-\alpha(r^2+(\ell-r)^2))\,\mathrm{d}r \\
     & = \frac{\sqrt{\pi}}{\ell\sqrt{2\alpha}}
            \cdot\exp(-\alpha\ell^2/2)
            \cdot\erf(\ell\sqrt{\alpha/2})  \enspace ,
  \end{align*}
  where 
  \[ \erf(x)=\frac{2}{\sqrt{\pi}}\int_0^x e^{-z^2}\,\mathrm{d}z \]
  is the \emph{Gauss error function}
  \cite[Section~7.2]{abramowitz.stegun:handbook}.

  Next, we remove the conditioning on $\ell$.  The triangle $T$ defines a
  region of area $\alpha\ell^2$ that is empty of points.  Therefore,
  by Property~1 of the Poisson process, the cumulative distribution
  function of $\ell$ is given by
  \[
    P(x) \equiv \Pr\{\ell \le x\} = 1-\exp(-\alpha x^2) \enspace ,
  \]
  for $x\ge 0$. The probability density function of $\ell$ is therefore 
  given by 
  \[
     p(x) \equiv \frac{d}{dx}P(x) =
     2\alpha x\cdot\exp(-\alpha x^2) \enspace ,
  \]
  for $x\ge 0$.  Finally, we obtain $p_k$ as
  \begin{align*}
     p_k = \int_0^\infty p(\ell)\cdot f(\ell)\,\mathrm{d}\ell 
     = \frac{\pi\sqrt{3}}{9}
      \approx 0.6045997883  \enspace . & \qedhere
  \end{align*}
\end{proof}


\subsection{Analysis of $p_k$ for Odd $k$}
\seclabel{odd}

Next, we determine the values of $p_k$ for odd values of $k\ge 5$.
Although the strategy for doing this is the same as the even case,
the odd case turns out to be considerably more complicated; the value
of $p_k$ does, indeed depend on $k$.

\noindent
\begin{minipage}{\textwidth}
\begin{lem}\lemlabel{odd}
  For odd $k\ge 5$,
\[
p_k = 
2
\left(\begin{array}{l}
  \arctan\left(
     2\left(\cos\left(\frac{\pi }{k}\right)
       +\cos\left(\frac{3 \pi }{k}\right)\right)^2 / (\alpha\beta) 
  \right) \\
   + \arctan\left(
       4 \left(2 \cos\left(\frac{2 \pi }{k}\right)
       +\sin\left(\frac{2 \pi }{k}\right)^2\right)/(\alpha\beta) 
     \right)
  \end{array}
\right)
\cot\left(\frac{\pi }{k}\right) 
\beta
/
\left(\gamma \alpha\right)\enspace ,
\]
where
\[
\gamma =4+11 \cos\left(\frac{2 \pi }{k}\right)+\cos\left(\frac{6 \pi }{k}\right) \enspace ,
\]
\[
\alpha = 
\sqrt{\frac{\left(27 \cos\left(\frac{\pi }{k}\right)+17 \cos\left(\frac{3 \pi }{k}\right)+3 \cos\left(\frac{5 \pi }{k}\right)+\cos\left(\frac{7 \pi }{k}\right)\right) \csc\left(\frac{\pi }{k}\right)}{\gamma}} \enspace ,
\]
and
\[
\beta = \sqrt{\left(18 \sin\left(\frac{2 \pi }{k}\right)+18 \sin\left(\frac{4 \pi }{k}\right)+11 \sin\left(\frac{6 \pi }{k}\right)+\sin\left(\frac{8 \pi }{k}\right)+\sin\left(\frac{10 \pi }{k}\right)\right)} \enspace .
\]
\end{lem}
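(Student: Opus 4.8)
The plan is to follow exactly the template established in the proof of \lemref{even}, but to account for the fact that, when $k$ is odd, the reflected triangle $T'$ no longer coincides with a cone of $u$; instead it straddles the boundary between two consecutive cones $w+C_j$ and $w+C_{j+1}$ (indices mod $k$). Consequently the edge $uw$ is mutual if and only if \emph{both} of these cones select $u$, which happens if and only if the region that must be empty is the union of $T'\setminus T$ \emph{together with} two additional triangular ``wings'' cut off by the axes of $w+C_j$ and $w+C_{j+1}$ — more precisely, for each of the two cones $w+C_j$, $w+C_{j+1}$, the vertex $u$ is the chosen neighbour iff the set of points projecting closer to $w$ than $u$ does, within that cone, is empty. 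I would set up coordinates with $u$ at the origin and the axis of $C$ along a fixed ray, parametrise as before by $\ell$ (the length of the far side of $T$) and $r$ (the position of $w$ along it), and write down the area $A(\ell,r)$ of the union of the two relevant ``closer than $u$'' regions in $w$'s two cones. The angle between the axis of $C$ and the axes of the two cones of $w$ that $T'$ overlaps is $\pi/k \pm \pi/k$ off from $\pi$, i.e. the geometry is governed by the half-cone angle $\theta/2 = \pi/k$ and its small perturbations, which is the source of the $\cos(\pi/k), \cos(3\pi/k), \ldots$ terms in the final formula.

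Concretely, I would first derive, by elementary trigonometry, a closed form $A(\ell,r) = a\,r^2 + b\,r(\ell-r) + c\,(\ell-r)^2 + (\text{possibly a linear-in-}\ell\text{ piece})$ for the area of the forbidden region as a quadratic form in $r$ and $\ell$ whose coefficients are trigonometric functions of $\pi/k$; the key subtlety is getting the two wing-triangles right, including which sub-case of the position of $u$'s projection relative to the two axes of $w$'s cones actually occurs (this is where one must be careful, and where a case analysis on the sign of certain $\sin(j\pi/k)$ expressions enters). Then, exactly as in the even case, $\Pr\{uw \text{ mutual}\mid \ell, r\} = \exp(-A(\ell,r))$; I integrate out $r$ uniformly over $[0,\ell]$ (the location of $w$ on the far side of $T$ is still uniform, since $w$ is still the chosen neighbour of $u$ in $C$, so the conditional law of $w$ given $\ell$ is unchanged), obtaining an expression involving $\exp(\cdot)$ and $\erf(\cdot)$ as before; and finally I integrate out $\ell$ against the same density $p(\ell) = 2\alpha_0 \ell \exp(-\alpha_0 \ell^2)$, where $\alpha_0 = \cos(\pi/k)/(4\sin(\pi/k))$ is the area coefficient of $T$ itself (note: this $\alpha_0$ is the analogue of $\alpha$ from \lemref{even}, not the $\alpha$ in the statement). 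The Gaussian-times-polynomial structure means the final $\ell$-integral evaluates to a combination of $\arctan$'s, which matches the claimed form; I would let Mathematica perform these two integrations and then simplify, verifying that the output agrees with the stated expression for $p_k$, $\alpha$, $\beta$, $\gamma$ (and double-checking numerically against $p_k \to 2\arctan(1/3)$ as $k\to\infty$, and that $p_k > \pi\sqrt3/9$ for odd $k$).

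The main obstacle, I expect, is the geometric case analysis for $A(\ell,r)$: unlike the even case, where $T'\setminus T$ is a clean pair of congruent triangles, here the forbidden region is an asymmetric union of pieces coming from two different cones of $w$, and one must correctly identify its shape. In particular one has to check whether $u$'s orthogonal projection onto each of the two axes (those of $w+C_j$ and $w+C_{j+1}$) lands inside or outside the relevant cone, since that determines whether a given ``wing'' is a triangle or is empty, and this may genuinely split into sub-cases depending on $k \bmod 4$ or on finer inequalities among the $\sin(j\pi/k)$. A secondary obstacle is purely bookkeeping: the resulting trigonometric expression is unwieldy, and simplifying Mathematica's raw output into the compact closed form in the statement (and proving the monotonicity in $k$ and the limit $2\arctan(1/3)$) requires some care with trigonometric identities, e.g. product-to-sum formulas to collapse things like $\cos(\pi/k)+\cos(3\pi/k)$ and to recognise $\gamma$, $\alpha^2$, $\beta^2$ in the denominator.
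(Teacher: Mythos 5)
There is a genuine gap, and it is in the one place where the odd case actually differs from the even case: your characterization of when $uw$ is mutual. For odd $k$, the cones of $w$ still partition the plane around $w$, so $u$ lies in exactly \emph{one} cone of $w$; the edge $uw$ is mutual if and only if that single cone selects $u$, i.e.\ if and only if the triangle with apex $w$ contained in that one cone and cut off by the line through $u$ orthogonal to \emph{that cone's} axis is empty (apart from its intersection with $T$, which is already known to be empty). Your proposal instead demands that ``both'' of the cones $w+C_j$ and $w+C_{j+1}$ overlapped by the reflected triangle $T'$ select $u$, and takes the forbidden region to be $T'\setminus T$ together with wings from both cones. A cone of $w$ that does not contain $u$ cannot select $u$ at all, so this condition is vacuous for one of the two cones, and the region you would require to be empty is strictly larger than the correct one. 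The resulting $\exp(-\text{area})$ would be too small, and the two integrations would not produce the stated formula (your own proposed sanity check, the limit $2\arctan(1/3)$, would fail). The reflected triangle $T'$ is in fact a red herring in the odd case: it does not appear in the correct forbidden region at all.

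The correct setup, which is what the paper does, is as follows. Rotate so that the axis of $C$ is horizontal and let $2\ell$ be the length of the side of $T$ opposite $u$. For odd $k$ there is a cone boundary of $w$ pointing exactly antiparallel to the axis of $C$, so the direction from $w$ back to $u$ lies just above or just below that boundary according to whether $w$ is above or below the axis of $C$; these two cases are mirror images, so one may assume $w$ is on or above the axis, at distance $r\in[0,\ell]$ from it (uniformly distributed). The unique cone of $w$ containing $u$ then has its axis tilted by $\pi/k$ from the reverse of the axis of $C$, and a law-of-sines computation expresses the area of the required-empty region as $A+B$ with $A=r^2/\sin(2\pi/k)$ and $B=\frac{\cos(2\pi/k)}{2\sin(3\pi/k)\cos(\pi/k)}\bigl((\ell-r)\cos(2\pi/k)+\ell\bigr)^2$. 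There is no case split on $k\bmod 4$ and no union over two cones. From that point on your outline (integrate out $r$ to get an $\erf$ expression, then integrate out $\ell$ against the density $p(\ell)$ obtained from the emptiness of $T$, with Mathematica doing the symbolic work) does match the paper, but without the correct single-cone forbidden region the computation cannot reach the claimed closed form.
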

\end{minipage}

\begin{proof}
  The proof proceeds in the same manner as the proof of \lemref{even}.
  Let $u$ be an arbitrary vertex in a $\theta_k$-graph and let $w$ be
  a vertex that $u$ has chosen as a neighbour in one of its cones, $C$.
  Let $T$ be the open isosceles triangle defined by $C$ and a line through
  $w$ that is orthogonal to the axis of $C$. See \figref{mutual-odd}.
  Assume that the side of $T$ opposite $u$ has length $2\ell$.  Using a
  suitable rotation, we may assume that the axis of $C$ is horizontal and,
  by symmetry, we may assume that $w$ is on or above the axis of $C$.
 
  \begin{figure}
    \includegraphics{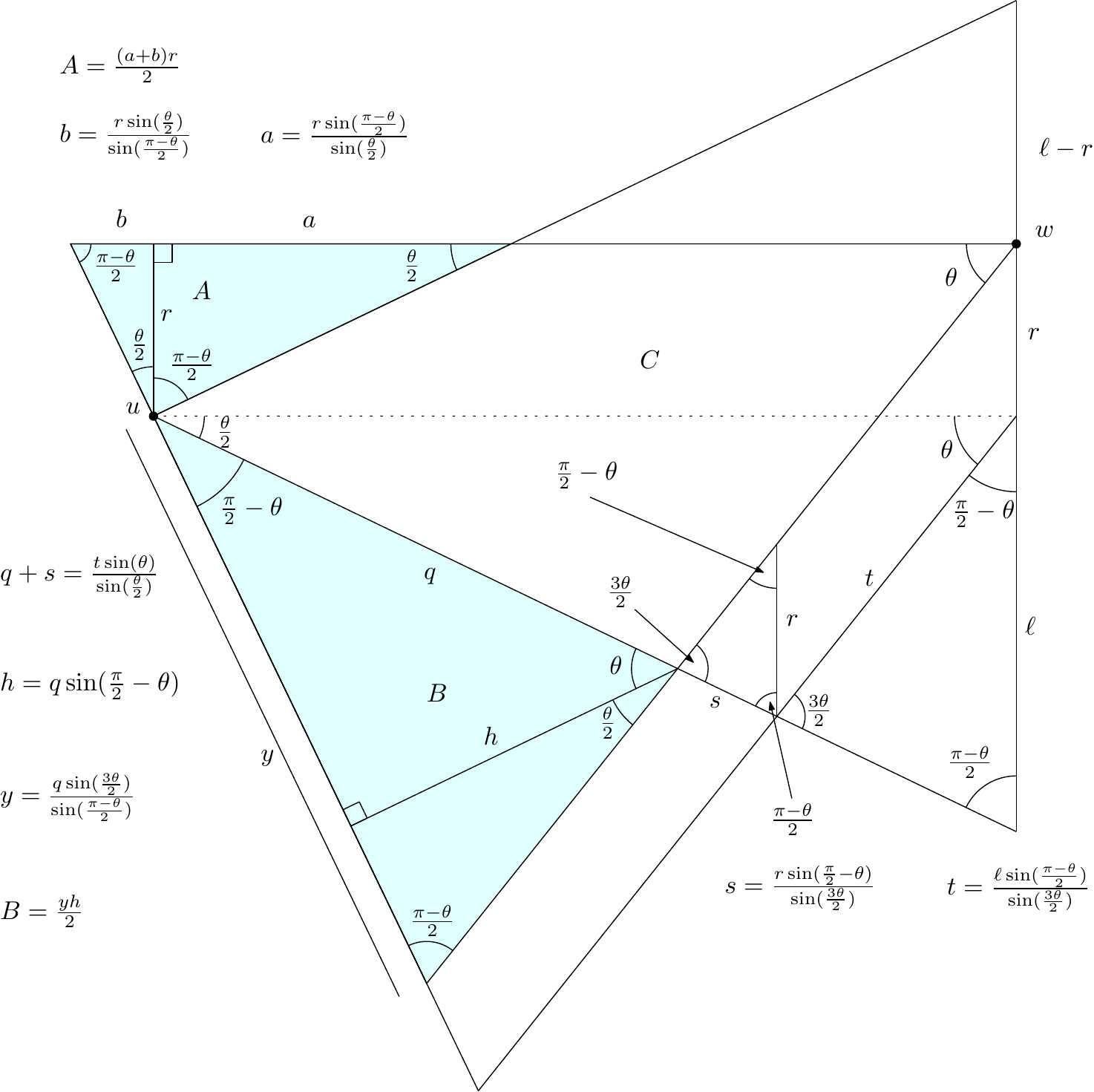}
    \caption{The derivation of the $\Pr\{\mbox{$uw$ is mutual}\mid \ell,r\}$
             for odd $k$.}
    \figlabel{mutual-odd}
  \end{figure}
  Under the preceding assumptions, $w$ is then uniformly distributed on
  a vertical segment of length $\ell$ whose endpoints are on the axis
  of $C$ and the upper boundary of $C$.  Suppose that the distance from
  $w$ to the axis of $C$ is $r$.  Then a straightforward, but tedious,
  calculation that mainly uses the law of sines shows that
  \[
      \Pr\{\mbox{$uw$ is mutual}\mid \ell, r\} = \exp(-A-B) \enspace ,
  \]
  where
  \[ 
      A = \frac{r^2}{\sin(2\pi/k)}
  \]
  and
  \[
      B = \frac{\cos(2\pi/k)}{2\sin(3\pi/k)\cos(\pi/k)}\cdot\left((\ell-r)\cos(2\pi/k)+\ell\right)^2
  \]
  This calculation is illustrated in \figref{mutual-odd} and the
  accompanying worksheet shows the simplifications that lead to the
  expressions for $A$ and $B$.  Integrating over $r$ gives us
  \begin{equation}
    f(\ell)\equiv \Pr\{\mbox{$uw$ is mutual}\mid \ell\} = 
      \int_0^\ell (1/\ell)\exp(-A-B)\,\mathrm{d}r . \eqlabel{odd-f}
  \end{equation}
  Like the corresponding integral in the proof of \lemref{even},
  \eqref{odd-f} has a closed-form that includes the Gauss error function.

  In order to remove the conditioning on $\ell$, we need the probability
  density function for $\ell$.  Proceeding as before, we have the
  cumulative distribution function
  \[
     P(x) \equiv \Pr\{\ell\le x\} 
          = 1 - \exp(x^2/\tan(\pi/k)) \enspace ,
  \]
  and the probability density function
  \[
    p(x)\equiv \frac{d}{dx}P(x) = 
     \left(\frac{2x}{\tan(\pi/k)}\right)
      \exp(x^2/\tan(\pi/k))
  \]
  Finally, we determine $p_k$ by integrating over $\ell$:
  \[
     p_k = \int_0^\infty p(\ell)\cdot f(\ell)\, \mathrm{d}{\ell} \enspace ,
  \]
  which (after introducing the variables $\alpha$, $\beta$, and $\gamma$)
  yields the expression for $p_k$ given in the statement of the lemma.
\end{proof}

\section{Points in a Unit Square}
\seclabel{iud}

Next we argue that results similar to Lemmas~\ref{lem:even} and
\ref{lem:odd}, albeit with lower-order error terms, hold for the
graph $\theta_k(S)$, where $S$ is a set of $n$ points independently
and uniformly distributed in the square $[0,\sqrt{n}]^2$ of area $n$.
Observe that, in this model, the probability that any particular region
$X\subseteq [0,\sqrt{n}]^2$ does not contain any points of $S$ is exactly
$(1-A/n)^n=\exp(-A)-O(A/n)$, where $A$ is the area of $X$.  This is
consistent with the Poisson model up to an additive error of $O(1/n)$.

The primary work in this section involves finding quantities that look
like those that appear in the previous section, but have an additive
lower-order error term.  To help manage these error terms, the
notation $x= y\pm a$ denotes that $x$ is some value in the interval
$[y-a,y+a]$.  We will abuse this notation slightly by writing equations
of the form $x\pm a = y\pm b$ when $[x-a,x+a]\subseteq[y-b,y+b]$.
Occasionally, we may also integrate expressions that use this notation.
In this case, we use the inequality $\int_a^b (x\pm c)\,\mathrm{d}x =
\int_a^b x \pm c(b-a)$.

We sometimes encounter expressions like $A/(1-x)$, with $0<x<1/2$ which
we bound by
\[
   A/(1-x) = A+O(Ax) \enspace .
\]

We also frequently encouter expressions like $(1-A/n)^{n-c}$, where $c$
is a constant and $A < n/2$.  We will always bound these as follows:
\begin{align*}
   (1-A/n)^{n-c} 
      & = \frac{(1-A/n)^n}{(1-A/n)^c} \\
      & = \frac{\exp(-A)-O(A/n)}{(1-A/n)^c} \\
      & = \frac{\exp(-A)-O(A/n)}{\sum_{i=0}^c \binom{c}{i}(-A/n)^i} \\
      & = \frac{\exp(-A)-O(A/n)}{1-O(A/n)} \\
      & \ge \exp(-A) - O(A/n) 
\end{align*}
and, similarly, 
\begin{align*}
   (1-A/n)^{n-c} 
      & = \frac{(1-A/n)^n}{(1-A/n)^c} \\
      & \le \frac{\exp(-A)}{(1-A/n)^c} \\
      & = \exp(-A)(1+O(A/n))  \enspace .
\end{align*}

\subsection{Expected Number of Edges}

In this section, we analyze the expected number of edges of $\theta_k(S)$.
For each point $u\in S$ and each $i\in\{1,\ldots,k\}$, let $e(u,i)$ be
the edge (if any) that $u$ chooses in its $i$th cone, $u+C_i$.  That is,
$e(u,i)$ is the edge $uw$ where $w\in u+C_i$ has the projection onto
the axis of $u+C_i$ that is smallest among all points in $S\cap u+C_i$.
We define the \emph{height} of the edge $uw=e(u,i)$ as the distance
between $u$ and the of the orthogonal projection of $w$ onto the axis
of $u+C_i$.

For our analysis, we partition $[0,\sqrt{n}]^2$ into a
\emph{core}, $C=[2t,\sqrt{n}-2t]^2$, where $t=\sqrt{ck\log n}$,
and a \emph{near-boundary}, $\bar{C}=[0,\sqrt{n}]^2\setminus C$
(see \figref{empty}).  The motivation for partitioning into a core and
near-boundary is that (1)~there are not many points in the near-boundary
and (2)~points in the core behave almost exactly like points in the
Poisson model.  The following Lemma shows, for example, that points in
the core nearly always have a neighbour in each of their cones.

\begin{figure}
  \begin{center}
    \includegraphics{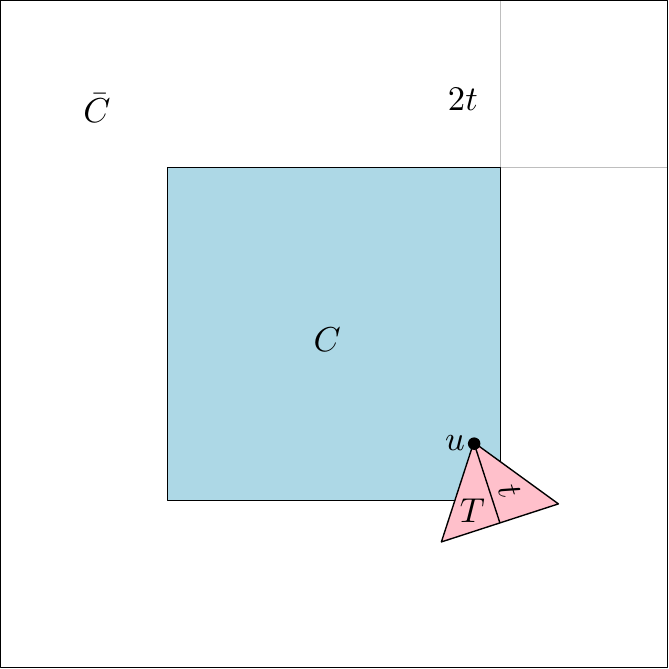}
  \end{center}
  \caption{The support, $[0,\sqrt{n}]^2$, is partitioned into a core, $C$,
    and a near-boundary $\bar{C}$.  A point $u$ in the core almost surely
    has a neighbour in every cone, otherwise $u$ is incident on a large
    triangle, $T\subset [0,\sqrt{n}]^2$, that is empty of points.}
  \figlabel{empty}
\end{figure}

\newcommand{\eui}{\mathcal{E}_{u,i}}
\newcommand{\neui}{\bar{\mathcal{E}}_{u,i}}

\begin{lem}\lemlabel{exists}
  For any $u\in S$ and any $i\in\{1,\ldots,k\}$, let $\eui$ denote
  the event ``$e(u,i)$ exists and has height at most $t$.''  Then
  $\Pr\{\eui\mid u\in C\} \ge 1-n^{-\Omega(c)}$.
\end{lem}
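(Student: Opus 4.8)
The plan is to show that the event $\neui$ (the complement of $\eui$) forces the existence of a large empty triangle anchored at $u$, and then bound the probability that such a triangle is empty using Property~1 of the i.u.d.\ model (namely, $\Pr\{X\text{ empty}\}=(1-A/n)^n\le e^{-A}$).

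First I would observe that, conditioned on $u\in C$, the cone $u+C_i$ intersected with the support $[0,\sqrt n]^2$ contains the full isosceles triangle $T_u$ cut off from $u+C_i$ by the line orthogonal to the cone axis at distance $t$ from $u$: this is precisely because $u$ is at distance at least $2t$ from every side of the square, and the triangle $T_u$ has ``radius'' $t$ along the axis, so $T_u\subseteq[0,\sqrt n]^2$ (this is the geometric content of \figref{empty}). The area of $T_u$ is $\area(T_u)=\alpha_k t^2$ for a constant $\alpha_k=\tan(\pi/k)$ depending only on $k$ (the same constant computed in the proofs of Lemmas~\ref{lem:even} and~\ref{lem:odd}). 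Next I would argue the key implication: if $\neui$ holds, then either there is no point of $S$ in $u+C_i$ at all with projection at most $t$, or there is such a point but it has height exceeding $t$ --- in either case $T_u$ contains no point of $S\setminus\{u\}$, since any point of $S$ in $T_u$ would be a candidate giving $e(u,i)$ a height at most $t$. Hence $\Pr\{\neui\mid u\in C\}\le\Pr\{T_u\cap(S\setminus\{u\})=\emptyset\mid u\in C\}$.

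Then I would bound the right-hand side. Given the location of $u$, the other $n-1$ points of $S$ are independent and uniform in $[0,\sqrt n]^2$, so
\[
  \Pr\{T_u\cap(S\setminus\{u\})=\emptyset\mid u\in C\}
  = \left(1-\frac{\area(T_u)}{n}\right)^{n-1}
  \le \exp\!\left(-\frac{n-1}{n}\,\alpha_k t^2\right).
\]
Substituting $t^2=ck\log n$ gives an exponent of $-\tfrac{n-1}{n}\alpha_k c k\log n = -\Omega(ck\log n)$, which is certainly $-\Omega(c\log n)$, so the probability is $n^{-\Omega(c)}$ (in fact $n^{-\Omega(ck)}$, but the weaker statement suffices and matches the lemma). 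Taking complements, $\Pr\{\eui\mid u\in C\}\ge 1-n^{-\Omega(c)}$, as claimed.

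The only slightly delicate point --- and the place I would be most careful --- is the geometric containment claim $T_u\subseteq[0,\sqrt n]^2$ for all $u\in C$ and all orientations $i$. This needs the core to be inset by $2t$ rather than $t$: the triangle $T_u$ extends distance $t$ along the cone axis from $u$, but its far edge has half-length $t\tan(\pi/k)< t$ for $k\ge 4$, so every point of $T_u$ is within distance $t\sqrt{1+\tan^2(\pi/k)} = t\sec(\pi/k)$ of $u$; this is at most $2t$ for $k\ge 3$, so $T_u$ stays inside the square whenever $u$ is at $\ell_\infty$-distance at least $2t$ from the boundary, i.e.\ $u\in C$. Everything else is a routine application of the empty-region probability bound already set up at the start of \secref{iud}.
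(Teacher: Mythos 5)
Your proof is correct and follows essentially the same route as the paper's: $\neui$ forces the height-$t$ isosceles triangle in $u+C_i$ to be empty of $S\setminus\{u\}$, the $2t$ inset of the core guarantees that triangle lies inside $[0,\sqrt n]^2$ (a containment the paper asserts and you verify explicitly), and the emptiness probability $(1-\alpha_k t^2/n)^{n-1}$ is $n^{-\Omega(c)}$. One quibble: your parenthetical claim that the bound is ``in fact $n^{-\Omega(ck)}$'' is wrong, since $\alpha_k=\tan(\pi/k)=\Theta(1/k)$ makes the exponent $\alpha_k\, ck\log n=\Theta(c\log n)$ rather than $\Theta(ck\log n)$ --- but you only invoke the weaker $n^{-\Omega(c)}$, so nothing breaks.
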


\begin{proof}
  Fix some location of $u\in C$ and draw an open isosceles triangle,
  $T$, with apex $u$, contained in $u+C_i$, whose internal angle at
  $u$ is $\theta=2\pi/k$ and whose height is $t$ (see \figref{empty}).
  Observe that, since $u$ is in the core, $T\subset [0,\sqrt{n}]^2$.
  Furthermore, the area of $T$
  is 
  \[ 
     A = t^2\tan(\pi/k)\in \Theta(t^2/k) = \Theta(c\log n) \enspace .
  \]
  For $T$ to be empty of points in $S$, the $n-1$ points of $S\setminus\{u\}$
  must all fall outside of $T$.  Thus, we have
  \begin{align*}
    \Pr\{T\cap S=\emptyset\mid u\in C\} 
       & = (1-A/n)^{n-1} \\
       & \le \exp(-A)(1+O(A/n)) \\
       & \le 2\exp(-A) & \text{(for sufficiently large $n$)}\\
       & = 2\exp(-\Theta(c\log n)) \\
       & = n^{-\Omega(c)} \enspace .
  \end{align*}
  If $\eui$ does not occur, this means that the event described above
  has occured.  Therefore, $\Pr\{\eui\mid u\in C\}\ge 1-n^{-\Omega(c)}$,
  as required.
\end{proof}

\newcommand{\mui}{\mathcal{M}_{u,i}}

Our next lemma shows that edges generated by points in the core have
essentially the same probability of being mutual as they do in the
Poisson model.

\begin{lem}\lemlabel{pk}
    For any $u\in S$ and any $i\in\{1,\ldots,k\}$, let $\mui$ denote the
    event ``$\eui$ and $e(u,i)$ is mutual.'' Then $\Pr\{\mui\mid u\in C\}
    = p_k\pm O((\log n)^2/n)$.
\end{lem}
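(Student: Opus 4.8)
The plan is to condition on the event $\eui$ (which by \lemref{exists} holds with probability $1 - n^{-\Omega(c)}$) and to recompute the mutual-edge probability along the same lines as in the Poisson model, tracking how the approximation $(1 - A/n)^n = \exp(-A) - O(A/n)$ propagates through each step. First I would set up the same geometry as in \lemref{even} (or \lemref{odd}): let $w = e(u,i)$'s other endpoint, let $T$ be the isosceles triangle it defines, with base $2\ell$ (or $\ell$) and the point $w$ at distance $r$ along the base, and let $T'$ be the reflected/rotated region whose emptiness is equivalent to $uw$ being mutual. Conditioned on $\eui$, all of the relevant regions ($T$, $T'\setminus T$) have area $O(t^2/k) = O(c\log n)$, so every emptiness probability is $(1-A/n)^{n - O(1)}$ with $A = O(\log n)$, and the bounds quoted in \secref{iud} give $(1-A/n)^{n-O(1)} = \exp(-A) \pm O((\log n)/n)$.

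The substantive work is to carry this error term through the two integrations. Conditioned on $\ell$ and $r$, $\Pr\{uw \text{ mutual} \mid \ell, r, \eui\} = \exp(-A(T'\setminus T)) \pm O((\log n)/n)$; next, the conditional density of $r$ given $\ell$ is still essentially uniform on $[0,\ell]$ up to a multiplicative $1 \pm O((\log n)/n)$ factor (since $w$ is the extreme point in a region of area $O(\log n)$), so integrating over $r$ costs only another additive $O((\log n)/n)$; and finally the density $p(\ell)$ of the base length is the derivative of $1 - (1-A(\ell)/n)^{n-1}$, which agrees with the Poisson density $2\alpha\ell\exp(-\alpha\ell^2)$ (resp.\ the odd-$k$ analogue) up to $O((\log n)/n)$ on the range $\ell = O(t/\sqrt{k})$. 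Multiplying the two error-carrying factors $f(\ell) \pm O((\log n)/n)$ and $p(\ell) \pm O((\log n)/n)$ and integrating over the range $\ell = O(t/\sqrt k) = O(\sqrt{c\log n})$ — an interval of length $O(\sqrt{\log n})$ — yields an additive error of $O((\log n)^{3/2}/n)$ from the product of $O((\log n)/n)$ with the $O(\sqrt{\log n})$ interval length, plus an $O((\log n)^2/n)$ contribution; the dominant term is $O((\log n)^2/n)$, which absorbs the $n^{-\Omega(c)}$ slack from \lemref{exists} for $c$ a sufficiently large constant.

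The main obstacle I anticipate is handling the tail carefully: the closed-form integrals in Lemmas~\ref{lem:even} and \ref{lem:odd} run $\ell$ over all of $[0,\infty)$, but in the i.u.d.\ model the triangle $T$ must fit inside $[0,\sqrt n]^2$ and the density argument is only valid for $A(\ell) < n/2$, i.e.\ $\ell = O(\sqrt n)$. I would split the $\ell$-integral at $\ell_0 = \Theta(t/\sqrt k)$: on $[0, \ell_0]$ the areas are $O(\log n)$ and the error analysis above applies verbatim, while on $(\ell_0, \infty)$ the Poisson integrand decays like $\exp(-\alpha\ell^2)$ and contributes at most $\exp(-\Omega(\ell_0^2)) = n^{-\Omega(c)}$ to $p_k$; correspondingly, the i.u.d.\ probability that $e(u,i)$ has base length exceeding $\ell_0$ is, conditioned on $\eui$, also $n^{-\Omega(c)}$ (this is essentially \lemref{exists} with $t$ replaced by a slightly smaller threshold, or a direct emptiness estimate). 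Thus both the Poisson value and the i.u.d.\ value change by only $n^{-\Omega(c)}$ when truncated to $[0,\ell_0]$, and on that bounded interval the two agree up to $O((\log n)^2/n)$. Combining, $\Pr\{\mui \mid u \in C\} = p_k \pm O((\log n)^2/n)$, which is the claim.
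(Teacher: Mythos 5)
Your proposal is correct and follows essentially the same route as the paper: condition on $\eui$ (using \lemref{exists} to absorb the complementary event into an $n^{-\Omega(c)}$ term), redo the computation of Lemmas~\ref{lem:even}/\ref{lem:odd} with the substitution $(1-A/n)^{n-O(1)}=\exp(-A)\pm O(A\,\mathrm{polylog}(n)/n)$, carry the additive errors through the integrations over $r$ and $\ell$, and truncate the $\ell$-integral at the scale where $T$ has area $O(\log n)$, paying only $n^{-\Omega(c)}$ for the Poisson tail. Your intermediate error exponents differ slightly from the paper's (which tracks $O((\alpha^2t^4+\alpha t^2)/n)$ explicitly), but the dominant $O((\log n)^2/n)$ term and the overall structure agree.
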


\begin{proof}
Throughout this proof, all probabilities we compute are conditional on
$u\in C$, even when this is not explicitly stated.  The proof is basically
a reproving of Lemmas~\ref{lem:even} and \ref{lem:odd} that takes care to
deal with boundary effects.  Here, we will prove the case for even $k$
(\lemref{even}) only.  The case for odd $k$ (\lemref{odd}) can be done
the same way.

We first compute the probability conditional on $\eui$ and for fixed
values of $\ell$ and $r$ as described in the proof of \lemref{even}
and illustrated in \figref{mutual}.  The notations $\ell$, $r$, $T$,
and $T'$ all have the same meaning as in the proof of \lemref{even}.
The edge $e(u,i)$ is mutual if and only if the remaining $n-2$ points
in $S\setminus\{u,w\}$---which are already conditioned on not being in
$T$---also fall outside of $T'$.  The probability that this happens is
\begin{align}
   \Pr\{\mui\mid\eui,\,\ell,\, r\}
        & = \left(\frac{1-\area(T\cup T')/n}{1-\area(T)/n}\right)^{n-2} 
		\notag \\
        & = \left(\frac{1-\area(T'\setminus T)/n - \area(T)/n}
             {1-\area(T)/n}\right)^{n-2} \notag \\
        & = \left(1-\frac{\area(T'\setminus T)/n}{1-\area(T)/n}\right)^{n-2}
		\notag  \\
        & = \left(1-\frac{\alpha(r^2+(\ell-r)^2)/n)}
                       {1-\alpha\ell^2/n}\right)^{n-2} \notag \\
        & = \left(1-\frac{\alpha(r^2+(\ell-r)^2)
               (1+O(\alpha\ell^2/n))}{n} \right)^{n-2} \notag \\
        & = \exp(-\alpha(r^2+(\ell-r)^2)(1+O(\alpha\ell^2/n)))
         \pm O(\alpha\ell^2/n) \notag \\
        & = \exp(-\alpha(r^2+(\ell-r)^2))
            \cdot\exp(O(\alpha^2\ell^4/n)) \pm O(\alpha\ell^2/n) \notag \\
        & = \exp(-\alpha(r^2+(\ell-r)^2))(1+O(\alpha^2\ell^4/n)) 
            \pm O(\alpha\ell^2/n) \eqlabel{blech} \\
        & = \exp(-\alpha(r^2+(\ell-r)^2))
            \pm O((\alpha^2\ell^4+\alpha\ell^2)/n) \notag \\
        & = \exp(-\alpha(r^2+(\ell-r)^2))
            \pm O((\alpha^2 t^4+\alpha t^2)/n) \enspace . \notag
\end{align}
(Step \eqref{blech} follows from the inequality $1+(e-1)x \ge e^{x}$
for $0\le x\le 1$.)  We then remove the conditioning on $r$ by integrating:
\begin{align*}
   f'(\ell) \equiv & \Pr\{\mui\mid\eui,\,\ell\} \\
     & = \int_0^\ell(1/\ell)\left(\exp(-\alpha(r^2+(\ell-r)^2)) 
           \pm O((\alpha^2 t^4+\alpha t^2)/n)\right)\,\mathrm{d}r \\
     & = f(\ell) \pm O((\alpha^2 t^4+\alpha t^2)/n) \enspace ,
\end{align*}
where $f(\ell)$ is the same $f(\ell)$ defined in the proof of \lemref{even}.

To finish, we need the distribution function for $\ell$ conditional on
$\eui$.  The triangle $T$ has area $\alpha\ell^2$ so the probability that
it is empty of points of $S\setminus\{u\}$ is $(1-\alpha\ell^2/n)^{n-1}$.
Therefore, for $0\le x\le t$, we have the cumulative distribution function
\begin{align*}
   P(x) & \equiv \Pr\{\ell \le x\mid \eui\}  \\
      & = \frac{\Pr\{\eui\mbox{ and }\ell\le x\}}
               {\Pr\{\eui\}} \\
      & = \frac{\Pr\{\ell\le x\}}  
               {\Pr\{\eui\}} & \text{(since $0\le x\le t$, so $\ell \le x$ implies $\eui$)} \\
      & = \frac{1-(1-\alpha x^2/n)^{n-1}}{1-(1-\alpha t^2/n)^{n-1}} 
\end{align*}
From this we obtain the density function
\begin{align*}
  p'(x) & \equiv \frac{d}{dx}P(x) \\
        & = \frac{d}{dx}
             \frac{1-(1-\alpha x^2/n)^{n-1}}{1-(1-\alpha t^2/n)^{n-1}} \\
        & = \frac{2\alpha x(n-1)(1-\alpha x^2/n)^{n-2}}
                  {n(1-(1-\alpha t^2/n)^{n-1})} \\
        & = \frac{2\alpha x(n-1)(\exp(-\alpha x^2)\pm O(\alpha x^2/n))}
                  {n(1-n^{-\Omega(c)})} \\
        & = \frac{2\alpha x(n-1)(\exp(-\alpha x^2)\pm O(\alpha x^2/n))}
                  {n-n^{-\Omega(c)}} \\
        & = \frac{2\alpha xn(\exp(-\alpha x^2)\pm O(\alpha x^2/n))}
                  {n-n^{-\Omega(c)}}(1-1/n) \\
        & = \frac{2\alpha x(\exp(-\alpha x^2)\pm O(\alpha x^2/n))}
                  {1-n^{-\Omega(c)}}(1-1/n) \\
        & = 2\alpha x\left(\exp(-\alpha x^2)\pm O(\alpha x^2/n)\right)
                  (1+n^{-\Omega(c)})(1-1/n) \\
        & = 2\alpha x\left(\exp(-\alpha x^2)\pm O((1+\alpha x^2)/n)\right) \\
        & = 2\alpha x\exp(-\alpha x^2)\pm O((\alpha x+\alpha^2 x^3)/n) \\
        & = 2\alpha x\exp(-\alpha x^2)\pm O((\alpha t + \alpha^2 t^3)/n) \\
        & = p(x)\pm O(\alpha^2 t^3/n) ,
\end{align*}
where $p(x)$ is the same $p(x)$ that appears in the proof of \lemref{even}.
And now we have enough information to finish:
\begin{align*}
    \Pr\{\mui \mid u\in C\} 
         & = \Pr\{\eui\mid u\in C\}\cdot \Pr\{\mui\mid \eui,\,u\in C\} 
            \\ & \qquad {}+\Pr\{\neui\mid u\in C\}\cdot\Pr\{\mui\mid\neui,\,u\in C\} \\
         & \ge \Pr\{\eui\mid u\in C\}\cdot \Pr\{\mui\mid \eui,\,u\in C\} \\
         & \ge \left(1-n^{-\Omega(c)}\right)
                \Pr\{\mui\mid \eui,\,u\in C\} \\
         & = \left(1-n^{-\Omega(c)}\right)
                 \int_0^t p'(\ell)f'(\ell)\,\mathrm{d}\ell \\
         & = \left(1-n^{-\Omega(c)}\right)
            \int_0^t (p(\ell)\pm O(\alpha^2t^3/n))
           \cdot(f(\ell)\pm O((\alpha^2 t^4+\alpha t^2)/n))\,\mathrm{d}{\ell} \\
         & = \left(1-n^{-\Omega(c)}\right)\int_0^t p(\ell)f(\ell)
           \pm O(p(\ell)(\alpha^2t^4+\alpha t^2)/n + f(\ell)\alpha^2t^3/n + (\alpha^5t^8+\alpha^3t^5)/n^2)
            \,\mathrm{d}{\ell} \\
         & = \left(1-n^{-\Omega(c)}\right)\int_0^t p(\ell)f(\ell)
           \pm O((\alpha^2t^4+\alpha t^2)/n + (\alpha^5t^8+\alpha^3t^5)/n^2)
            \,\mathrm{d}{\ell} \\
        & = \frac{\pi\sqrt{3}}{9} \pm O((\alpha^2t^4+\alpha t^2)/n) \\
        & =  \frac{\pi\sqrt{3}}{9} \pm O((\log n)^2/n)
\end{align*}
and
\begin{align*}
    \Pr\{\mui \mid u\in C\} 
       & \le \Pr\{\eui\mid u\in C\}\cdot \Pr\{\mui\mid \eui\}
           + (1-\Pr\{\eui\}) \\
       & \le \Pr\{\eui\mid u\in C\}\cdot \Pr\{\mui\mid \eui\} 
           + n^{-\Omega(c)} \\
         & = \frac{\pi\sqrt{3}}{9} \pm O((\log n)^2/n) \enspace . \qedhere
\end{align*} 
\end{proof}

\begin{lem}
 Let $S$ be a set of $n$ points independently and uniformly distributed
 in $[0,1]^2$.  Then the expected number of edges of $\theta_k(S)$ is 
 $nd_k/2\pm O(k\sqrt{nk\log n})$.
\end{lem}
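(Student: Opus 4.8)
The plan is to reduce the edge count to the two quantities already controlled by Lemmas~\ref{lem:exists} and \ref{lem:pk}. Since $\theta_k$-graphs are invariant under scaling, I would work with $S$ uniformly distributed in $[0,\sqrt n]^2$ and keep all the notation ($C$, $\bar C$, $t=\sqrt{ck\log n}$, $e(u,i)$, $\eui$, $\mui$) of this section, fixing $c$ to be a large enough absolute constant that every $n^{-\Omega(c)}$ term below is dominated by the other errors. The key structural observation is that if $N=\sum_{u\in S}\sum_{i=1}^k\mathbf 1[e(u,i)\text{ exists}]$ counts the (point, cone) pairs that produce an edge and $M$ is the number of mutual edges of $\theta_k(S)$, then every non-mutual edge is counted once by $N$ and every mutual edge twice, so $m_k=N-M$; likewise, since a mutual edge $uw$ is $e(u,i)$ for exactly one cone $i$ of $u$ and $e(w,j)$ for exactly one cone $j$ of $w$, we have $\sum_{u\in S}\sum_{i=1}^k\mathbf 1[e(u,i)\text{ exists and is mutual}]=2M$. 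Hence $\E[m_k]=\E[N]-\E[M]$, and it remains to estimate these two expectations.

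Next I would estimate $\E[N]$. Letting $u$ denote a fixed one of the $n$ points of $S$, symmetry gives $\E[N]=n\sum_{i=1}^k\Pr\{e(u,i)\text{ exists}\}$. I would condition on whether $u$ lies in the core or the near-boundary, using $\Pr\{u\in\bar C\}=\area(\bar C)/n=O(t/\sqrt n)$, and, inside the core, the bound $\Pr\{e(u,i)\text{ exists}\mid u\in C\}\ge\Pr\{\eui\mid u\in C\}\ge 1-n^{-\Omega(c)}$ from \lemref{exists}. This yields $\Pr\{e(u,i)\text{ exists}\}=1\pm O(t/\sqrt n)$ for each $i$, and therefore $\E[N]=nk\pm O(nkt/\sqrt n)=nk\pm O(k\sqrt{nk\log n})$.

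Then I would estimate $\E[M]$. By the double-counting identity, $\E[M]=\tfrac12 n\sum_{i=1}^k\Pr\{e(u,i)\text{ exists and is mutual}\}$. The event ``$e(u,i)$ exists and is mutual'' differs from $\mui$ only when $e(u,i)$ exists but has height exceeding $t$, which, conditioned on $u\in C$, has probability at most $n^{-\Omega(c)}$ by \lemref{exists}; combining this with $\Pr\{\mui\mid u\in C\}=p_k\pm O((\log n)^2/n)$ from \lemref{pk} and the same core/near-boundary conditioning gives $\Pr\{e(u,i)\text{ exists and is mutual}\}=p_k\pm O(t/\sqrt n)$ for each $i$ (the $(\log n)^2/n$ and $n^{-\Omega(c)}$ errors being absorbed into $O(t/\sqrt n)$). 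Hence $\E[M]=\tfrac12 nkp_k\pm O(k\sqrt{nk\log n})$. Putting the pieces together, $\E[m_k]=\E[N]-\E[M]=nk-\tfrac12 nkp_k\pm O(k\sqrt{nk\log n})=\tfrac12(2-p_k)kn\pm O(k\sqrt{nk\log n})=nd_k/2\pm O(k\sqrt{nk\log n})$, using $d_k=(2-p_k)k$.

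The applications of Lemmas~\ref{lem:exists} and \ref{lem:pk} are essentially turnkey, so the real work—and the place where I expect the bookkeeping to be most delicate—is keeping the error terms straight: three kinds of error appear (the $O((\log n)^2/n)$ from \lemref{pk}, the $n^{-\Omega(c)}$ from \lemref{exists}, and the $O(t/\sqrt n)$ from the near-boundary split), and one must check that they are all dominated by the boundary term $O(kt\sqrt n)=O(k\sqrt{nk\log n})$, which arises simply because $O(t\sqrt n)$ points are expected to lie in $\bar C$ and each is incident to at most $k$ edges. A secondary point to handle carefully is the double-counting identity $2M=\sum_{u}\sum_i\mathbf 1[e(u,i)\text{ exists and is mutual}]$, which rests on the (elementary) fact that each mutual edge is chosen in exactly one cone of each of its endpoints.
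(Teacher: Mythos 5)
Your proposal is correct and follows essentially the same route as the paper: the identity $m_k = N - M$ (directed edge count minus mutual corrections) is the paper's $E = D - M/2$ in slightly different bookkeeping, and both arguments reduce to Lemmas~\ref{lem:exists} and \ref{lem:pk} together with the observation that the near-boundary contributes only $O(kt\sqrt{n}) = O(k\sqrt{nk\log n})$ edges. The only cosmetic difference is that the paper organizes the estimate as separate lower and upper bounds on the core contribution, whereas you fold everything into a single two-sided estimate per (point, cone) pair.
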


\begin{proof}
  In this proof, it will be helpful to distinguish between directed
  and undirected edges. Undirected edges are the edges $\theta_k(S)$
  that we have been considering throughout.  In contrast, $uw=e(u,i)$
  is a \emph{directed edge} from $u$ to $w$.  If $uw$ is mutual, then
  $wu=e(w,j)$ is a distinct directed edge from $w$ to $u$.  In this way,
  if we let $E$ denote the number of undirected edges, $D$ the number
  of directed edges, and $M$ the number of mutual directed edges of
  $\theta_k(S)$.   Then we have
  \[  
     E = D - M/2 \enspace . 
  \]
  Let $E_C$, $D_C$, and $M_C$ denote the same quantities but only
  counting those edges with at least one endpoint in the core. (See \figref{arrows}.)

  \begin{figure}
    \begin{center}
      \includegraphics{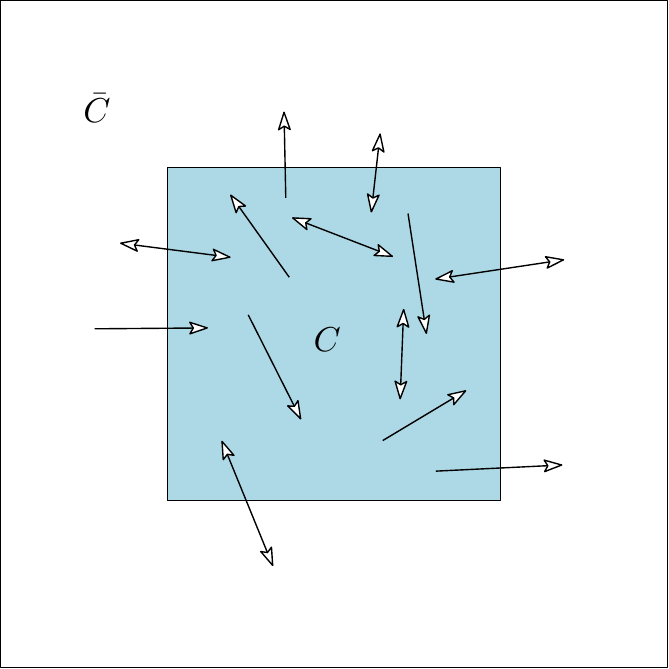}
    \end{center}
    \caption{The quantities $E_C$, $D_C$, and $M_C$: $E_C$ counts the
      number of segments,  $D_C$ counts the number of arrowheads, and $M_C$
      counts the number of segments with two arrowheads.}
    \figlabel{arrows}
  \end{figure}
  We begin with a lower bound:
  \begin{align*}
   \E[E] & \ge \E[E_C] \\
         & = \E[D_C] - \E[M_C]/2 \\
         & = \sum_{u\in S}\sum_{i=1}^k\Pr\{u\in C\}
              \left(\Pr\{\eui\mid u\in C\}
                - \Pr\{\mui\mid u\in C\}/2 \right) \\
        & = k(\sqrt{n}-4t)^2
              \left(\Pr\{\eui\mid u\in C\}
                - \Pr\{\mui\mid u\in C\}/2 \right) \\
        & \ge k(\sqrt{n}-4t)^2
              \left(1-n^{-\Omega(c)} - p_k/2 - O((\log n)^\frac52/n)\right) \\
        & \ge (kn - 8t\sqrt{n})
              \left(1-n^{-\Omega(c)} - p_k/2 - O((\log n)^\frac52/n)\right) \\
        & = kn(1-p_k/2) - O(t\sqrt{n}) \\
        & = nd_k/2 - O(\sqrt{nk\log n})
  \end{align*} 
  For the upper-bound we proceed as follows:
  \begin{align*}
     \E[E] & \le \E[E_C] + \E[k\cdot|S\cap\bar{C}|] \\
           & = \E[E_C] + k(8t\sqrt{n}-16t^2) \\
           & \le \E[E_C] + 4k\sqrt{nck\log n} \\
           & \le nd_k/2 + O((\log n)^\frac52) + 4k\sqrt{nck\log n} \\
           & = nd_k/2 + O(k\sqrt{nk\log n})
  \end{align*}
  where the last step follows from a calculation similar to that
  done in the lower-bound.
%
%
%
\end{proof}

\subsection{Concentration}

Next, we show that the number of edges in this model is tightly
concentrated about its expected value.  We begin with the following
result, which follows immediately from Hoeffding's Inequality
\cite{boucheron.lugosi.ea:concentration}, and shows that the number of
points in the core is highly concentrated around its expected value:
\begin{lem}\lemlabel{coresize}
  $\Pr\left\{\left||S\cap C|-(\sqrt{n}-4t)^2\right| \ge \sqrt{ck\log n}\right\}
  \in n^{-\Omega(c)}$.
\end{lem}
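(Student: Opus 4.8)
The plan is to view the core count as a sum of independent bounded indicators and apply Hoeffding's inequality verbatim. First I would write $|S\cap C|=\sum_{u\in S}X_u$, where $X_u=\mathbf{1}[u\in C]$. Since the $n$ points are drawn independently and uniformly from $[0,\sqrt n]^2$, the $X_u$ are independent, each is confined to $\{0,1\}$, and each has mean $\E[X_u]=\area(C)/n=(\sqrt n-4t)^2/n$. Summing gives $\E[|S\cap C|]=(\sqrt n-4t)^2$ exactly, which is precisely the centering constant in the statement; thus the claim is a pure concentration-about-the-mean assertion, and no property of the sampling beyond independence and the $\{0,1\}$ range is needed.

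With the mean identified, Hoeffding's inequality applies immediately. For a sum of $n$ independent variables each lying in an interval of length one, and any deviation $\rho>0$, it gives
\[
  \Pr\bigl\{\,\bigl||S\cap C|-(\sqrt n-4t)^2\bigr|\ge\rho\,\bigr\}\le 2\exp\!\left(-\frac{2\rho^2}{n}\right)\enspace .
\]
This single inequality is the entire engine of the proof; the only remaining task is to confirm that the deviation makes the right-hand side $n^{-\Omega(c)}$.

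The step I expect to be the main obstacle is exactly this matching of scales. The bound above equals $n^{-\Omega(c)}=\exp(-\Omega(c\log n))$ precisely when $2\rho^2/n=\Omega(c\log n)$, i.e.\ when the deviation satisfies $\rho=\Omega(\sqrt{cn\log n})$; substituting $\rho=\sqrt{cn\log n}$ gives $2\exp(-2c\log n)=2n^{-2c}$, which is $n^{-\Omega(c)}$. I would therefore feed Hoeffding the deviation at this scale $\sqrt{cn\log n}$. For reference, plugging the geometric half-width $t=\sqrt{ck\log n}$ directly into Hoeffding yields only $2\exp(-2ck\log n/n)$; since the core count has standard deviation of order $(ckn\log n)^{1/4}$, which dominates $t$, the concentration scale must be taken a factor $\sqrt{n/k}$ larger than $t$—namely $\sqrt{cn\log n}$—to produce a polynomially small tail. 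Keeping this concentration scale distinct from the geometric scale $t$ that defines the near-boundary is the only subtlety; the concentration inequality itself is routine.
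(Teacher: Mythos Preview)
Your approach is exactly the paper's: the text merely says the lemma ``follows immediately from Hoeffding's Inequality'' and gives no further argument, so the setup you describe --- writing $|S\cap C|$ as a sum of $n$ independent $\{0,1\}$ indicators, each with mean $(\sqrt n-4t)^2/n$ --- is the intended one.

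Where you diverge is that you actually check the arithmetic, and you are right: with the deviation $\rho=\sqrt{ck\log n}=t$ that appears in the statement, Hoeffding yields only $2\exp(-2ck(\log n)/n)$, which is not $n^{-\Omega(c)}$ for fixed $k$. The lemma as written is not merely unprovable by Hoeffding but false outright: $|S\cap C|$ is binomial with variance $np(1-p)\sim 8t\sqrt n$, so its standard deviation is of order $(ckn\log n)^{1/4}$, which grows like $n^{1/4}$ and therefore dominates $\sqrt{ck\log n}$; by the central limit theorem the probability in the statement tends to $1$, not to $0$. There is no subtler argument you are missing --- the stated threshold is simply an error in the paper. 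Your proposed replacement $\rho=\sqrt{cn\log n}$ is the correct scale at which Hoeffding gives $2\exp(-2c\log n)=2n^{-2c}=n^{-\Omega(c)}$, so that corrected version of the lemma is what one can actually prove.
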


To prove our concentration bound, we make use of a versatile concentration
inequality due to McDiarmid \cite[Lemma~1.2]{mcdiarmid:on}:

\begin{lem}[McDiarmid's Inequality]
Let $X_1,\ldots,X_n$ be independent all taking values in the set
$\mathcal{X}$ and let $f:\mathcal{X}^n\mapsto\R$ be a function such that
\[
    |f(x_1,\ldots,x_i,\ldots,x_n) - f(x_1,\ldots,x_i',\ldots,x_n)|
    \le k \enspace ,
\]
for some $k>0$ and all $x_1,\ldots,x_n,x_i'\in \mathcal{X}$
and all $i\in\{1,\ldots,n\}$.
Then, for all $\epsilon > 0$,
\[
    \Pr\left\{\left|f(X_1,\ldots,X_n) - \E[f(X_1,\ldots,X_n)]\right| \ge \epsilon\right\}
       \le 2\exp(-2\epsilon^2/(nk^2)) \enspace .
\]
\end{lem}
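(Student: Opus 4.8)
\noindent\emph{Proof plan.}\quad The plan is to prove McDiarmid's Inequality directly, by applying an Azuma--Hoeffding-style exponential-moment bound to the Doob martingale of $f$. Set $Z_0=\E[f(X_1,\ldots,X_n)]$ and, for $i\in\{1,\ldots,n\}$, $Z_i=\E[f(X_1,\ldots,X_n)\mid X_1,\ldots,X_i]$, so that $Z_n=f(X_1,\ldots,X_n)$ and $Z_0,\ldots,Z_n$ is a martingale with respect to the filtration generated by $X_1,\ldots,X_n$. Writing $D_i=Z_i-Z_{i-1}$, the whole argument reduces to showing that, conditional on $X_1,\ldots,X_{i-1}$, the increment $D_i$ has mean zero and takes values in an interval of width at most $k$, and then running the usual exponential-moment computation on $Z_n-Z_0=\sum_{i=1}^n D_i$.

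First I would establish the width bound. Fix $x_1,\ldots,x_{i-1}$ and set $g(x)=\E[f(x_1,\ldots,x_{i-1},x,X_{i+1},\ldots,X_n)]$; by independence of $X_1,\ldots,X_n$ we have $Z_i=g(X_i)$ and $Z_{i-1}=\E[g(X_i)]$. For any two values $x,x'$, the bounded-differences hypothesis and linearity of expectation give $|g(x)-g(x')|\le \E[\,|f(x_1,\ldots,x,\ldots,X_n)-f(x_1,\ldots,x',\ldots,X_n)|\,]\le k$, so the range of $g$ is contained in an interval of length at most $k$. Hence $D_i=g(X_i)-\E[g(X_i)]$ has conditional mean zero and lies in an ($\mathcal{F}_{i-1}$-measurable) interval of width at most $k$.

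Next I would apply Hoeffding's Lemma (a one-line convexity argument): a mean-zero random variable confined to an interval of width $w$ has moment generating function at most $e^{\lambda^2 w^2/8}$ for every $\lambda$. Applied conditionally this gives $\E[e^{\lambda D_i}\mid X_1,\ldots,X_{i-1}]\le e^{\lambda^2 k^2/8}$, and peeling the increments off one at a time with the tower property yields $\E[e^{\lambda(Z_n-Z_0)}]\le e^{n\lambda^2 k^2/8}$. Markov's inequality then gives $\Pr\{Z_n-Z_0\ge\epsilon\}\le e^{-\lambda\epsilon+n\lambda^2 k^2/8}$, and choosing $\lambda=4\epsilon/(nk^2)$ makes this $e^{-2\epsilon^2/(nk^2)}$. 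Applying the same bound to $-f$ (which satisfies the same bounded-differences condition) controls the lower tail, and a union bound over the two one-sided events contributes the factor $2$, giving $\Pr\{|f(X_1,\ldots,X_n)-\E[f]|\ge\epsilon\}\le 2e^{-2\epsilon^2/(nk^2)}$, as claimed.

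The only delicate point, and hence the main obstacle, is obtaining the constant $2$ in the exponent rather than the weaker $1/2$ a cruder argument would give. It is essential to observe that $D_i$ lives in an interval of \emph{width} $k$, not merely that $|D_i|\le k$ read as $D_i\in[-k,k]$ (width $2k$), since the latter would cost a factor of $4$; and it is essential to invoke Hoeffding's Lemma with its sharp $\lambda^2 w^2/8$ constant rather than the lossy bound $\E[e^{\lambda D}]\le e^{\lambda^2 w^2/2}$. Once those two sharp ingredients are in place, everything else is a routine martingale computation.
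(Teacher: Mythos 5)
The paper does not prove this lemma; it is imported verbatim from McDiarmid \cite[Lemma~1.2]{mcdiarmid:on} as a black box, so there is no internal proof to compare against. Your argument is the standard (essentially McDiarmid's own) proof via the Doob martingale, the width-$k$ bound on each increment, Hoeffding's lemma, and optimization of the Chernoff parameter at $\lambda=4\epsilon/(nk^2)$, and it is correct --- including the two points you flag as delicate, namely that $D_i$ lies in an interval of width $k$ rather than in $[-k,k]$, and that the sharp $e^{\lambda^2w^2/8}$ moment bound is needed to recover the constant $2$ in the exponent.
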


\begin{lem}
 Let $S$ be a set of $n$ points independently and uniformly distributed
 in $[0,\sqrt{n}]^2$ and let $m_k$ denote the number of edges in $\theta_k(S)$.
 Then $\Pr\{|m_k-\E[m_k]| \ge k\sqrt{cn\log n}\} \le n^{-\Omega(c)}$.
\end{lem}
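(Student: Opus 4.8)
The plan is to regard $m_k$ as a function $f(X_1,\dots,X_n)$ of the $n$ independent points $X_1,\dots,X_n\in[0,\sqrt n]^2$ and to apply a bounded-differences concentration inequality such as McDiarmid's inequality. Once we show that relocating a single point changes the number of edges of $\theta_k(S)$ by $O(k)$, McDiarmid's inequality gives $\Pr\{|m_k-\E[m_k]|\ge\eps\}\le 2\exp(-\Omega(\eps^2/(nk^2)))$; substituting $\eps=k\sqrt{cn\log n}$ makes the right-hand side $2\exp(-\Omega(c\log n))=n^{-\Omega(c)}$. To pass from $\E[m_k]$ to $nd_k/2$ we invoke the previous lemma, which gives $\E[m_k]=nd_k/2\pm O(k\sqrt{nk\log n})$; since $O(k\sqrt{nk\log n})$ is dominated by $k\sqrt{cn\log n}$ once $c$ exceeds a constant, $|m_k-nd_k/2|\ge k\sqrt{cn\log n}$ also holds with probability at most $n^{-\Omega(c)}$.

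The crux is the bounded-differences estimate. When $X_i=x_i$ is moved to $x_i'$, the edge set of $\theta_k(S)$ changes only as follows: the (at most $k$) edges selected by $x_i$ vanish and the (at most $k$) edges selected by $x_i'$ appear; in addition, each vertex $u\ne x_i$ that had selected $x_i$ in one of its cones now re-selects in that cone, changing one incident edge, and likewise for each vertex that selects $x_i'$ after the move; no other edge is affected, because in a cone not containing $x_i$ (respectively $x_i'$) the minimum-projection point does not change. The first group of changes contributes $O(k)$. The second group is governed by the in-degree of $x_i$ (the number of vertices that chose it), and here lies the difficulty: this in-degree can be as large as the maximum degree of $\theta_k(S)$, which the analysis of Devroye \etal\ shows is $\Theta(\log n/\log\log n)$, not $O(k)$. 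Thus the \emph{worst-case} Lipschitz constant is too large to feed directly into McDiarmid's inequality, and this is the main obstacle.

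It is resolved by not insisting on a worst-case bound. The key point is that, although in-degrees can be logarithmic, the \emph{expected} in-degree of a fixed point, when the remaining points are placed uniformly at random, is only $O(k)$: the total number of directed edges is at most $kn$, so by exchangeability every point has expected in-degree at most $k$, and one checks that this remains $O(k)$ even after conditioning on the positions of an arbitrary subset of the other points (the extreme case being an essentially isolated point, which still has $O(k)$ expected in-degree because a far-away vertex selects it only if a triangle whose area grows with the distance is empty). One therefore replaces McDiarmid's inequality by the Doob martingale $Z_i=\E[m_k\mid X_1,\dots,X_i]$: each increment $|Z_i-Z_{i-1}|$ is bounded by $k$ plus the conditional expected in-degrees of the two candidate placements of $X_i$, hence by $O(k)$, and Azuma's inequality (the martingale form of McDiarmid's inequality) then delivers the same $2\exp(-\Omega(\eps^2/(nk^2)))$ bound. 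The near-boundary, where degree irregularities are concentrated, can alternatively be peeled off first via the decomposition $m_k=E_C+E_{\bar C}$ with $E_{\bar C}\le k|S\cap\bar C|$ controlled by Lemmas~\lemref{exists} and \lemref{coresize}, but the martingale argument handles boundary and interior uniformly; the remaining work --- verifying the uniform $O(k)$ bound on conditional expected in-degree and assembling the pieces --- is routine.
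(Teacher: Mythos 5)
You have put your finger on the one genuinely delicate step --- that relocating a single point affects not only its $k$ out-edges but also the edges of every vertex that had selected it, so the worst-case single-point effect is governed by in-degrees, which are not bounded by $O(k)$. For the record, the paper does \emph{not} take your route: it adds a fixed set $Q$ of $k$ auxiliary points so that every vertex of $S$ selects in every cone, asserts the worst-case bounded difference $|f(\ldots,u_i,\ldots)-f(\ldots,u_i',\ldots)|\le k$ for $f=|E(\theta_k(S\cup Q))|$ on the grounds that each lost in-edge $wu_i$ is replaced one-for-one by $w$'s re-selection, and applies McDiarmid directly; the case you worry about (the replacement edge coinciding with an edge that already exists because the new target had already selected $w$) is exactly what that one-line justification sweeps under the rug. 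So your diagnosis of the difficulty is legitimate. The problem is that your repair does not close the gap. Azuma's inequality needs $|Z_i-Z_{i-1}|\le c_i$ for \emph{every} realization of $X_1,\ldots,X_i$, and your claim that the conditional expected in-degree of $X_i$ stays $O(k)$ ``after conditioning on the positions of an arbitrary subset of the other points'' is false. The exchangeability argument (total out-degree $\le kn$) controls only the in-degree contributed by the still-unrevealed uniform points; the already-revealed points $x_1,\ldots,x_{i-1}$ enter the conditional expectation as fixed locations and can be adversarial. For instance, place $m$ of them on a short arc of a circle of radius $\eps$ around $x_i$: for $k\ge5$ the direction from one arc point to a nearby arc point is nearly perpendicular to its direction to $x_i$, hence outside the angle-$2\pi/k<\pi/2$ cone containing $x_i$, so each arc point has $x_i$ as its minimum-projection point in that cone, certified by an empty triangle of area $O(\eps^2/k)$. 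The $n-i$ unrevealed points miss all $m$ triangles with probability $1-O(m\eps^2/k)$, so $\E[d^-(x_i)\mid X_1,\ldots,X_i]=(1-o(1))m$, which can be $\Omega(i)$; arranging the second choices suitably makes $Z_i-Z_{i-1}=\Omega(i)$ as well. Such configurations have positive density in the support, so they cannot be discarded ``almost surely.''

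To salvage the plan you would need a truncation: show that the event ``some point of $S$ has in-degree exceeding $C(k+\log n)$'' has probability $n^{-\Omega(c)}$ (this is essentially the maximum-degree bound of Devroye \etal\ that you cite), and then invoke a version of the bounded-differences method that tolerates an exceptional event of small probability (e.g.\ Kutin's or Warnke's ``typical bounded differences'' inequalities), paying an extra additive term of the form $n\Pr\{\text{bad}\}$ in the deviation probability and a polylogarithmic factor in the Lipschitz constant. Alternatively one can work with the decomposition $E=D-M/2$, note that $D$ is deterministic once $Q$ is added, and prove concentration of the number of mutual edges separately. As written, the step ``hence by $O(k)$'' is the missing piece of your argument, and it is not routine.
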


\begin{proof}
Let $Q$ be a set of $k$ points chosen such that, for any point
$u\in[0,\sqrt{n}]^2$, each of $u$'s $\theta_k$-cones contains exactly
one point in $Q$. ($Q$ could be, for example, the vertices of a
sufficiently large regular $k$-gon. See \figref{q}.)  We begin by studying
$\theta_k(S\cup Q)$.  This graph is somewhat more nicely behaved since
each vertex in $S$ defines exactly $k$ directed edges.

\begin{figure}
  \begin{center}
    \includegraphics{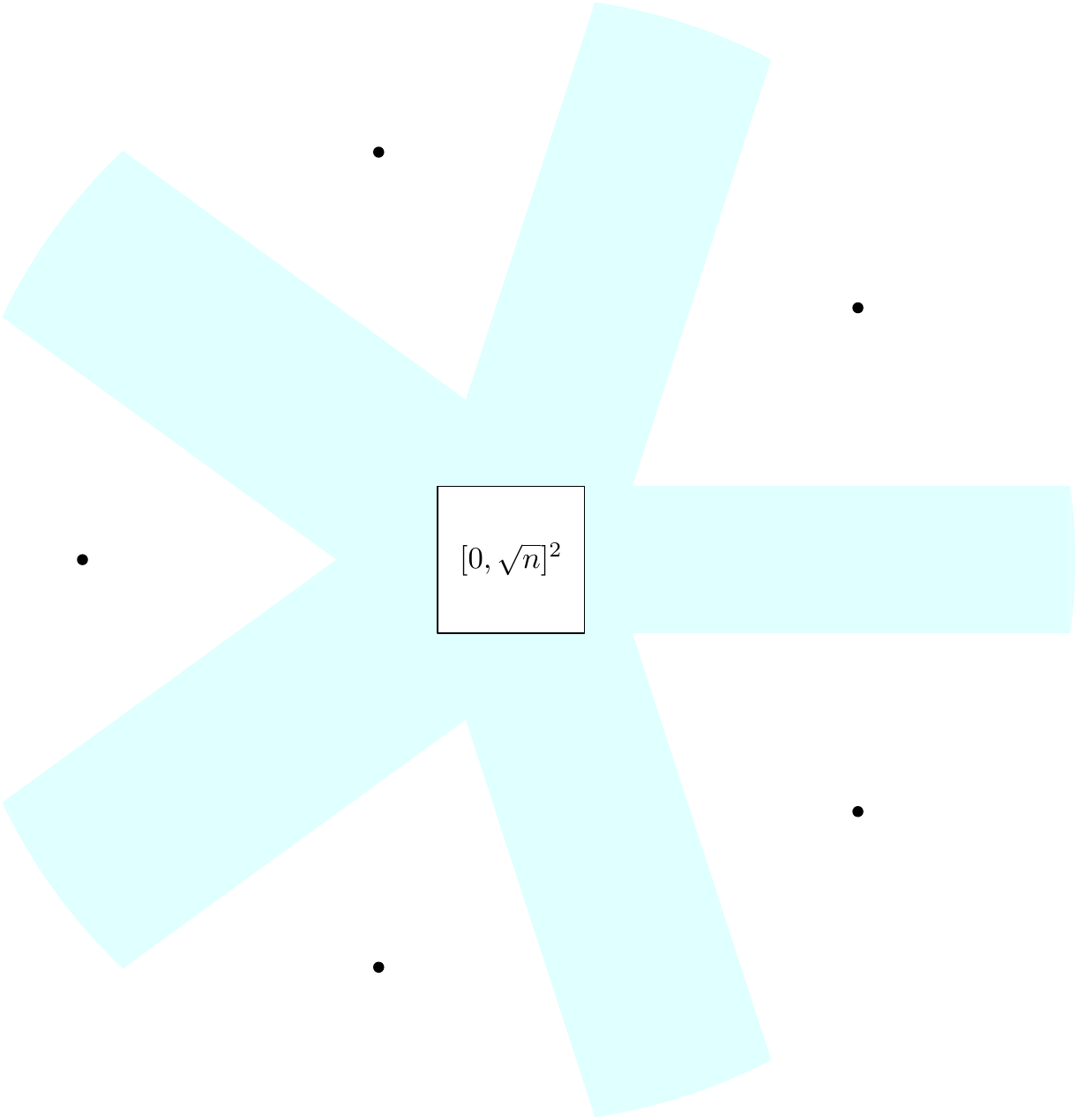}
  \end{center}
  \caption{The set $Q$ ensures that every vertex in $S$ has at least
    one point in each of its $\theta_k$-cones.  This figure shows a
    possible set $Q$ for $k=5$.}
  \figlabel{q}
\end{figure}

For a set $S=\{u_1,\ldots,u_n\}\subset [0,\sqrt{n}]^2$, let
$f(u_1,\ldots,u_n)$ be the function that counts the number of edges
in $\theta_k(S\cup Q)$.  Observe that, for any $u_1,\ldots,u_n,u_i'\in
[0,1]^2$, we have that
\[
   |f(u_1,\ldots,u_i,\ldots,u_n)-f(u_1,\ldots,u_i',\ldots,u_n)| \le k \enspace .
\]
That is, moving $u_i$ to location $u_i'$ can be though of as removing
$u_i$, resulting in the loss of at most $k$ non-mutual edges emanating
from $u_i$, followed by adding $u_i'$ resulting in the creation of at
most $k$ non-mutual edges emanating from $u_i'$.  Letting $m_k'$ denote
the number of edges in $\theta_k(S\cup Q)$, we immediately obtain,
from McDiarmid's Inequality,
\[
   \Pr\left\{\left|m_k'-\E[m_k']\right| \ge k\sqrt{cn\log n}\right\} 
       \le 2\exp(-2c\log n) 
       \in n^{-\Omega(c)} \enspace .
\]
To extend this result to $m_k$, the number of edges of $\theta_k(S)$,
we study $\Pr\{|m_k' - m_k|\ge k\sqrt{cn\log n}\}$.  If $|m_k' - m_k|\ge
k\sqrt{cn\log n}$ then the number of points of $S$ in the near-boundary
$\bar{C}$ exceeds $\sqrt{cn\log n}$ or $\theta_k(S\cup Q)$ contains
edges that join points in $Q$ to points in the core, $C$.
\lemref{coresize} shows that the probability of the former event is at most
$n^{-\Omega(c)}$, while \lemref{exists} shows that the probability of
the latter event is at most $n^{-\Omega(c)}$
\end{proof}

\section{Discussion}
\seclabel{summary}

We have given exact closed-form expressions for the average degree of
$\theta_k$-graphs for all $k\ge 4$.  It is easily shown that $\theta_k$
graphs with $k=1,2,3$ are not spanners, so the cases $k\ge 4$ are
the most important.  We have also shown that the number of edges in
a $\theta_k$-graph of $n$ points uniformly distributed in a square is
highly concentrated around its expected value.  These results can be
used to inform practitioners about the optimal choice of $k$ to use in
a particular application.  \tabref{values} gives the numerical values
of $d_k$, $p_k$, and $d_k/k$, for $k\in\{4,\ldots,20\}$.

\begin{table}
   \begin{center}
     \begin{tabular}{r|r|r|r|r|}
       \multicolumn{1}{c|}{$k$} & \multicolumn{1}{c|}{$p_k$} & \multicolumn{1}{c|}{$d_k$} & \multicolumn{1}{c|}{$d_k/k$} & \multicolumn{1}{c|}{$\hat{d}_k$} \\ \hline
         4 & 0.60459978 &  5.58160084 & 1.39540021 &  5.58190420  \\
         5 & 0.70168463 &  6.49157685 & 1.29831537 &  6.49280635  \\
         6 & 0.60459978 &  8.37240127 & 1.39540021 &  8.37587443  \\
         7 & 0.66778040 &  9.32553719 & 1.33221959 &  9.32743266  \\
         8 & 0.60459978 & 11.16320170 & 1.39540021 & 11.16338275  \\
         9 & 0.65740932 & 12.08331611 & 1.34259067 & 12.08319579  \\
        10 & 0.60459978 & 13.95400212 & 1.39540021 & 13.95657530  \\
        11 & 0.65259895 & 14.82141149 & 1.34740104 & 14.82372518  \\
        12 & 0.60459978 & 16.74480254 & 1.39540021 & 16.74624814  \\
        13 & 0.64993659 & 17.55082423 & 1.35006340 & 17.55520514  \\
        14 & 0.60459978 & 19.53560297 & 1.39540021 & 19.53693380  \\
        15 & 0.64830027 & 20.27549589 & 1.35169972 & 20.27565544  \\
        16 & 0.60459978 & 22.32640339 & 1.39540021 & 22.32956115  \\
        17 & 0.64722020 & 22.99725644 & 1.35277979 & 22.99979580  \\
        18 & 0.60459978 & 25.11720381 & 1.39540021 & 25.12062469  \\
        19 & 0.64646902 & 25.71708858 & 1.35353097 & 25.71956002  \\
        20 & 0.60459978 & 27.90800424 & 1.39540021 & 27.91343314  
     \end{tabular}
   \end{center}
   \caption{Numeric values of $p_k$ and $d_k$.}
   \tablabel{values}
\end{table}

Some of our results involved extensive symbolic manipulations that were
done with the help of Mathematica.  There is certainly the possibility
of mistakes, either in the software or by its users.  In order to help
validate our analytical results, the final column in \tabref{values} also
shows the results of the following experiment: 4,000 points were generated
uniformly in the unit square $[0,1]^2$ and their $\theta_k$-graph was
computed.  The average degree of points in the square $[1/3,2/3]^2$ was
then computed.  The final column of \tabref{values} shows the average of
these values taken over all 2,000 repetitions of this experiment.  In all
cases, it agrees with our theoretical results up to the first 3 digits.

Yao graphs \cite{flinchbaugh.jones:strong,yao:on} are closely related to
theta-graphs.  In the Yao graph, $Y_k(S)$, each vertex $u$ is connected by
an edge to the closest point in each of its  theta cones, where closest
is defined in terms of Euclidean distance.  The same strategy we use to
determine the average degree of $\theta_k(S)$ can be applied to $Y_k(S)$.
Unfortunately, this fails to give closed form answers.  In particular,
when trying to repeat the proof of \lemref{even}, we obtain the formula
\[
  p'_k = \int_0^\infty \theta t\exp(-\theta t^2/2) \int_0^t \exp(-\theta t^2/2+t^2\sin(\gamma)\sin(\theta-\gamma)/\sin(\theta))\,\mathrm d\gamma\, \mathrm dt \enspace ,
\]
for which we are unable to obtain a closed form (here $\theta=2\pi/k$).
Nevertheless, one can evaluate this integral numerically to obtain
estimates of the mutual edge probability and average degree in Yao graphs.

\section*{Acknowledgement}

The authors of this paper are partly funded by NSERC and CFI.

\bibliographystyle{abbrv}
\bibliography{avgtheta}

\end{document}